\newtheorem{definition}{Definition}
\newtheorem{theorem}{Theorem}
\newtheorem{lemma}{Lemma}
\begin{document}
%
\title{An Information-Spectrum Approach to the Capacity Region of General Interference Channel}



\author{\authorblockN{Lei Lin\authorrefmark{1}, Xiao Ma\authorrefmark{2}, Xiujie Huang\authorrefmark{2} and Baoming Bai\authorrefmark{3}
\authorblockA{\authorrefmark{1}Department of Applied Mathematics, Sun Yat-sen University, Guangzhou 510275, Guangdong, China}
\authorblockA{\authorrefmark{2}Department of Electronics and Communication
Engineering, Sun Yat-sen University, Guangzhou 510006, GD, China}
\authorblockA{\authorrefmark{3}State Lab. of ISN, Xidian University, Xi'an 710071, Shaanxi, China}}
Email: linlei2@mail2.sysu.edu.cn, maxiao@mail.sysu.edu.cn}


%


\maketitle

\begin{abstract}
This paper is concerned with general interference channels characterized by a sequence of transition~(conditional) probabilities.
We present a general formula for the capacity region of the interference channel with two pairs of users. The formula shows that the capacity region is the union of a family of rectangles, where each rectangle is determined by a pair of spectral inf-mutual information rates. Although the presented formula is usually difficult to compute, it provides us useful insights into the interference channels. For example, the formula suggests us that the simplest inner bounds~(obtained by treating the interference as noise) could be improved by taking into account the structure of the interference processes. This is verified numerically by computing the mutual information rates for Gaussian interference channels with embedded convolutional codes.

\end{abstract}




%
\IEEEpeerreviewmaketitle

\section{Introduction}
The interference channel~(IC) is a communication model with multiple pairs of senders and receivers, in which each sender has an independent message intended only for the corresponding receiver. This model was first mentioned by Shannon~\cite{Shannon61} in 1961 and further studied by Ahlswede~\cite{Ahlswede74} in 1974. A basic problem for the IC is to determine the capacity region, which is currently one of long-standing open problems in information theory. Only in some special cases, the capacity regions are known, such as strong interference channels and deterministic interference channels~\cite{Carleial75,Han81,ElGamal82,Costa87}. On the other hand, some inner and outer bounds of the capacity region are obtained, for example, see~\cite{Han81,Kramer04,Etkin08}. In recent years, Etkin, Tse and Wang~\cite{Etkin08} introduced the idea of approximation to show that Han and Kobayashi region~(HK region)~\cite{Han81} is within one bit of the capacity region for Gaussian interference channel~(GIFC).

In~\cite{XiaoMa2011}, the authors proposed a new computational model for the two-user GIFC, in which one pair of users~(called {\em primary users}) are constrained to use a fixed encoder and the other pair of users~(called {\em secondary users}) are allowed to optimize their codes. The maximum rate at which the secondary users can communicate reliably without degrading the performance of the primary users is called the {\em accessible capacity} of the secondary users. Since the structure of the interference from the primary link has been taken into account in the computation, the accessible capacity is usually higher than the maximum rate when treating the interference as noise, as is consistent with the spirit of~\cite{FrancoisArticle2011}\cite{Moshksar11}. However, to compute the accessible capacity~\cite{XiaoMa2011}, the primary link is allowed to have a non-neglected error probability. This makes the model unattractive when the capacity region is considered. For this reason, we will remove the fixed-code constraints on the primary users in this paper.\footnote{The authors are grateful to Prof. Tse who inspired us to continue the research in this direction when our previous work was presented at ISIT'2011.} In other words, we will compute a pair of transmission rates at which both links can be asymptotically error-free.

To justify the computational results, we consider a more general interference channel which is characterized by a sequence of transition probabilities ${\bf W} = \{W^n\}_{n=1}^{\infty}$. Similar to~\cite{Han98}, we utilize the information spectrum approach~\cite{Han93}\cite{Han03}. The capacity region can be described in terms of the {\em spectral inf-mutual information rates}.

The rest of the paper is structured as follows. Sec.~\ref{Definition_section} introduces the main definitions, including general IC
and spectral inf-mutual information rate. In Sec.~\ref{Capacity_section}, a formula is proved for the capacity region of the general IC.
In Sec.~\ref{GIFC_section} we present numerical results for GIFC with binary-phase shift-keying~(BPSK) modulation. Sec.~\ref{Conclusions} concludes this paper.
\begin{figure}
  \includegraphics[width=8.5cm]{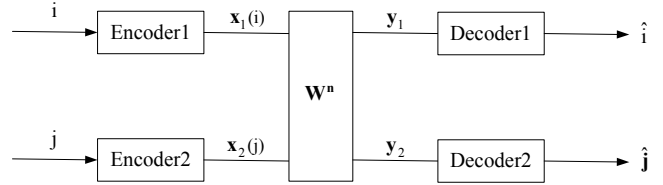}\\
  \caption{General interference channel ${\bf W}$.}\label{general IFC_Fig}
\end{figure}

\section{Basic Definitions And Problem Statement}\label{Definition_section}
\subsection{General IC}
Let $\mathcal{X}_1$, $\mathcal{X}_2$ be two finite input alphabets and $\mathcal{Y}_1$, $\mathcal{Y}_2$ be two finite output alphabets. A general interference channel ${\bf W}$~(see Fig.~\ref{general IFC_Fig}) is characterized by a sequence ${\bf W} = \{W^n(\cdot,\cdot|\cdot,\cdot)\}_{n=1}^{\infty}$, where $W^n: \mathcal{X}_1^n \times \mathcal{X}_2^n \rightarrow \mathcal{Y}_1^n \times \mathcal{Y}_2^n$ is a probability transition matrix. That is, for all $n$,
\begin{eqnarray*}
  W^n({\bf y}_1,{\bf y}_2|{\bf x}_1,{\bf x}_2)  &\geq&  0 \\
  \sum\limits_{{\bf y}_1 \in \mathcal{Y}_1^n,{\bf y}_2 \in \mathcal{Y}_2^n} W^n({\bf y}_1,{\bf y}_2|{\bf x}_1,{\bf x}_2) &=& 1.
\end{eqnarray*}
The marginal distributions $W_1^n,W_2^n$ of the $W^n$ are given by
\begin{eqnarray}
W_1^n({\bf y}_1|{\bf x}_1,{\bf x}_2) &=& \sum_{{\bf y}_2 \in \mathcal{Y}_2^n} W^n({\bf y}_1,{\bf y}_2|{\bf x}_1,{\bf x}_2),\\
W_2^n({\bf y}_2|{\bf x}_1,{\bf x}_2) &=& \sum_{{\bf y}_1 \in \mathcal{Y}_1^n} W^n({\bf y}_1,{\bf y}_2|{\bf x}_1,{\bf x}_2).
\end{eqnarray}
\begin{definition}
An $(n,M_n^{(1)},M_n^{(2)},\varepsilon_n^{(1)}, \varepsilon_n^{(2)})$ code for the interference channel ${\bf W}$ consists of the following essentials,
\begin{itemize}
  \item [a)]message sets:
\begin{eqnarray}
\mathcal{M}_n^{(1)} = \{1,2,\ldots,M_n^{(1)}\},\,\,\,&{\rm for\,\,\,sender~1}\nonumber\\
\mathcal{M}_n^{(2)} = \{1,2,\ldots,M_n^{(2)}\},\,\,\,&{\rm for\,\,\,sender~2}\nonumber
\end{eqnarray}

  \item [b)]sets of codewords:
\begin{equation}\nonumber
\begin{array}{ll}
\{{\bf x}_1(1),{\bf x}_1(2),\ldots,{\bf x}_1(M_n^{(1)})\} \subset \mathcal{X}_1^n,\,\,\,&{\rm for\,\,\,encoder~1}\\
\{{\bf x}_2(1),{\bf x}_2(2),\ldots,{\bf x}_2(M_n^{(2)})\} \subset \mathcal{X}_2^n,\,\,\,&{\rm for\,\,\,encoder~2}
\end{array}
\end{equation}

For sender~1 to transmit message $i$, encoder~1 outputs the codeword ${\bf x}_1(i)$. Similarly, for sender~2 to transmit message $i$, encoder~2 outputs the codeword ${\bf x}_2(j)$.

  \item [c)]collections of decoding sets:
\begin{equation}\nonumber
\begin{array}{ll}
\{\mathcal{B}_{1i} \subseteq \mathcal{Y}_1^n\}_{i = 1,...,M_n^{(1)}},\,\,\,&{\rm for\,\,\,decoder~1}\\
\{\mathcal{B}_{2j} \subseteq \mathcal{Y}_2^n\}_{j = 1,...,M_n^{(2)}},\,\,\,&{\rm for\,\,\,decoder~2}
\end{array}
\end{equation}
where $\mathcal{B}_{1i} \bigcap \mathcal{B}_{1i'} = \emptyset$ for $i \neq i'$ and $\mathcal{B}_{2j} \bigcap \mathcal{B}_{2j'} = \emptyset$ for $j \neq j'$.

After receiving ${\bf y}_1$, decoder~1 outputs $\hat i$ whenever ${\bf y}_1 \in \mathcal{B}_{1\hat{i}}$. Similarly, after receiving ${\bf y}_2$, decoder~2 outputs $\hat j$ whenever ${\bf y}_2 \in \mathcal{B}_{2\hat{j}}$.

  \item [d)]probabilities of decoding errors:
\begin{equation}\nonumber
\begin{array}{ll}
      & \varepsilon_n^{(1)} = \frac{1}{M_n^{(1)}M_n^{(2)}}\sum\limits_{i=1}^{M_n^{(1)}} \sum\limits_{j=1}^{M_n^{(2)}}W_1^n(\mathcal{B}^c_{1i}|{\bf x}_1(i), {\bf x}_2(j))\\
      & \varepsilon_n^{(2)} = \frac{1}{M_n^{(1)}M_n^{(2)}}\sum\limits_{i=1}^{M_n^{(1)}} \sum\limits_{j=1}^{M_n^{(2)}}W_2^n(\mathcal{B}^c_{2j}|{\bf x}_1(i), {\bf x}_2(j)),
\end{array}
\end{equation}
where the superscript $``c"$ denotes the complement of a set. Here we have assumed that each message of $i \in \mathcal{M}_n^{(1)}$ and $j \in \mathcal{M}_n^{(2)}$ is produced with equal probability.
\end{itemize}
\end{definition}
{\bf Remark}: The optimal decoding to minimize the probability of errors is defining the decoding sets $\mathcal{B}_{1i}$ and $\mathcal{B}_{2j}$ according to the the maximum likelihood decoding rule~\cite{Etkin10}. That is, the two receivers choose, respectively,
$$\hat{i} = \arg\max_i {\rm Pr}\{{\bf y}_1|{\bf x}_1(i)\}$$
and
$$\hat{j} = \arg\max_j {\rm Pr}\{{\bf y}_2|{\bf x}_2(j)\}$$
as the transmitted messages.

\begin{definition}
A rate pair $(R_1,R_2)$ is achievable if there exists a sequence of $(n,M_n^{(1)},M_n^{(2)},\varepsilon_n^{(1)}, \varepsilon_n^{(2)})$ codes such that
\begin{eqnarray}
\lim_{n \rightarrow \infty} \varepsilon_n^{(1)} = 0 &{\rm and}& \lim_{n \rightarrow \infty} \varepsilon_n^{(2)} = 0 ,\nonumber\\
\liminf_{n \rightarrow \infty} \frac{\log M_n^{(1)}}{n} \geq R_1 &{\rm and}& \liminf_{n \rightarrow \infty} \frac{\log M_n^{(2)}}{n} \geq R_2.\nonumber
\end{eqnarray}
\end{definition}

\begin{definition}
The set of all achievable rates is called the capacity region of the interference channel ${\bf W}$, which is denoted by $\mathcal{C}({\bf W})$.
\end{definition}

\subsection{Preliminaries of Information-Spectrum Approach}
The following notions can be found in~\cite{Han03}.
\begin{definition}[liminf in probability]
For a sequence of random variables $\{Z_n\}_{n=1}^{\infty}$,
$$p\textrm{-}\liminf_{n \rightarrow \infty}Z_n \equiv \sup\{\beta|\lim_{n \rightarrow \infty}{\rm Pr}\{Z_n < \beta\} = 0\}.$$
\end{definition}

\begin{definition}
If two random variables sequences ${\bf X}_1 = \{{X}_1^n\}_{n=1}^{\infty}$ and ${\bf X}_2 = \{{X}_2^n\}_{n=1}^{\infty}$ satisfy that
\begin{equation}\label{P_independ}
P_{{X}_1^n{X}_2^n}({\bf x}_1,{\bf x}_2) = P_{{X}_1^n}({\bf x}_1) P_{{X}_2^n}
({\bf x}_2)
\end{equation}
for all ${\bf x}_1 \in \mathcal{X}_1^n$, ${\bf x}_2 \in \mathcal{X}_2^n$ and $n$, they are called independent and denoted by ${\bf X}_1 \bot {\bf X}_2$.
\end{definition}

Similar to~\cite{Han93}, we have
\begin{definition}
Let $S_I \stackrel{\triangle}{=} \{({\bf X}_1,{\bf X}_2)| {\bf X}_1 \bot {\bf X}_2\}$. Given an $({\bf X}_1,{\bf X}_2) \in S_I$, for the interference channel ${\bf W}$, we define the {\em spectral inf-mutual information rate} by
\begin{eqnarray}
\underline{I}({\bf X}_1;{\bf Y}_1) &\equiv& p\textrm{-}\liminf_{n \rightarrow \infty}\frac{1}{n} \log \frac{P_{Y_1^n|X_1^n}({Y}_1^n | {X}_1^n)}{P_{{Y}_1^n}({Y}_1^n)}\\
\underline{I}({\bf X}_2;{\bf Y}_2) &\equiv& p\textrm{-}\liminf_{n \rightarrow \infty}\frac{1}{n} \log \frac{P_{Y_2^n|X_2^n}({Y}_2^n | {X}_2^n)}{P_{{Y}_2^n}({Y}_2^n)},
\end{eqnarray}
where
\begin{eqnarray}
P_{Y_1^n|X_1^n}({\bf y}_1|{\bf x}_1) &=& \sum_{{\bf x}_2 ,{\bf y}_2} P_{X_2^n}({\bf x}_2)W^n({\bf y}_1,{\bf y}_2|{\bf x}_1,{\bf x}_2),\label{Py1x1}\\
P_{Y_2^n|X_2^n}({\bf y}_2|{\bf x}_2) &=& \sum_{{\bf x}_1 ,{\bf y}_1} P_{X_1^n}({\bf x}_1)W^n({\bf y}_1,{\bf y}_2|{\bf x}_1,{\bf x}_2).\label{Py2x2}
\end{eqnarray}
\end{definition}

\section{The Capacity Region of General IC}\label{Capacity_section}
In this section, we derive a formula for the capacity region $\mathcal{C}({\bf W})$ of the general IC.

\begin{theorem}\label{Capacity_Theorem}
The capacity region $\mathcal{C}({\bf W})$ of the interference channel ${\bf W}$ is given by
\begin{equation}
\mathcal{C}({\bf W}) = \bigcup_{({\bf X}_1,{\bf X}_2) \in S_I} \mathcal{R}_{\bf W}({\bf X}_1,{\bf X}_2),
\end{equation}
where $\mathcal{R}_{\bf W}({\bf X}_1,{\bf X}_2)$ is defined as the collection of all $(R_1,R_2)$ satisfying that
\begin{eqnarray}
0 \leq R_1 &\leq& \underline{I}({\bf X}_1;{\bf Y}_1)\label{Capacity_R1}\\
0 \leq R_2 &\leq& \underline{I}({\bf X}_2;{\bf Y}_2)\label{Capacity_R2}.
\end{eqnarray}
\end{theorem}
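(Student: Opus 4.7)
The proof naturally splits into an achievability part and a converse part, both of which reduce to the single-user Verdú--Han general channel coding theorem applied to the ``effective'' marginal channels $P_{Y_1^n|X_1^n}$ and $P_{Y_2^n|X_2^n}$ defined by (\ref{Py1x1}) and (\ref{Py2x2}). The crucial structural observation is that for $({\bf X}_1,{\bf X}_2)\in S_I$, once ${\bf X}_2$ is averaged out with respect to its own distribution, user~1 sees a bona fide single-user channel, and symmetrically for user~2.

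For achievability, fix $({\bf X}_1,{\bf X}_2)\in S_I$ and rates $R_1<\underline{I}({\bf X}_1;{\bf Y}_1)$, $R_2<\underline{I}({\bf X}_2;{\bf Y}_2)$. The plan is a two-codebook random coding argument: draw the $M_n^{(1)}$ codewords of encoder~1 i.i.d.\ from $P_{X_1^n}$, and independently draw the $M_n^{(2)}$ codewords of encoder~2 i.i.d.\ from $P_{X_2^n}$; at the receivers, use the information-density threshold decoder of Feinstein, i.e.\ decoder~1 accepts $i$ iff $\frac{1}{n}\log\frac{P_{Y_1^n|X_1^n}({\bf y}_1|{\bf x}_1(i))}{P_{Y_1^n}({\bf y}_1)}>R_1+\gamma$ uniquely, and analogously for decoder~2. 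When one computes the expected error probability over both codebooks and the uniform messages, the expectation over encoder~2's codebook collapses to the marginalization in (\ref{Py1x1}), so the effective channel faced by decoder~1 is exactly $P_{Y_1^n|X_1^n}$; Feinstein's lemma then bounds $\mathbb{E}[\varepsilon_n^{(1)}]$ by $\Pr\{\frac{1}{n}\log\frac{P_{Y_1^n|X_1^n}}{P_{Y_1^n}}\le R_1+\gamma\}+e^{-n\gamma}$, which vanishes by definition of $\underline{I}({\bf X}_1;{\bf Y}_1)$. The same bound holds for $\mathbb{E}[\varepsilon_n^{(2)}]$, so a union bound and the standard expurgation/selection argument yield a sequence of codes with both error probabilities tending to zero.

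For the converse, assume $(R_1,R_2)$ is achievable and let $\{(n,M_n^{(1)},M_n^{(2)},\varepsilon_n^{(1)},\varepsilon_n^{(2)})\}$ be a witnessing code sequence. Define $X_1^n$ uniform over the codebook of encoder~1 and $X_2^n$ uniform (and independently) over the codebook of encoder~2; then the induced sequences $({\bf X}_1,{\bf X}_2)$ lie in $S_I$, and the resulting output marginals are precisely those appearing in (\ref{Py1x1})--(\ref{Py2x2}). The Verdú--Han converse lemma (the information-spectrum analogue of Fano's inequality) applied to the single-user channel $P_{Y_1^n|X_1^n}$ with the given decoder yields, for every $\gamma>0$,
\begin{equation*}
\varepsilon_n^{(1)} \;\geq\; \Pr\!\left\{\tfrac{1}{n}\log\tfrac{P_{Y_1^n|X_1^n}(Y_1^n|X_1^n)}{P_{Y_1^n}(Y_1^n)} \le \tfrac{1}{n}\log M_n^{(1)} - \gamma\right\} - e^{-n\gamma}.
\end{equation*}
Since $\varepsilon_n^{(1)}\to 0$ and $\liminf \tfrac{1}{n}\log M_n^{(1)}\ge R_1$, letting $n\to\infty$ and then $\gamma\downarrow 0$ gives $\underline{I}({\bf X}_1;{\bf Y}_1)\ge R_1$; the symmetric argument yields $\underline{I}({\bf X}_2;{\bf Y}_2)\ge R_2$. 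Thus $(R_1,R_2)\in\mathcal{R}_{\bf W}({\bf X}_1,{\bf X}_2)$ for this particular pair $({\bf X}_1,{\bf X}_2)\in S_I$.

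The main conceptual obstacle is not any single estimate but the justification that the two-user problem genuinely decouples into two single-user problems under the independence constraint $S_I$: on the direct side, that independently drawn random codebooks make the averaged channel for each user equal to the stated marginal, and on the converse side, that the codebook-induced inputs $(X_1^n,X_2^n)$ are automatically independent so that (\ref{Py1x1})--(\ref{Py2x2}) coincide with the true output distributions. Once this decoupling is nailed down, both halves are direct invocations of the general single-user information-spectrum formula, and no joint typicality or HK-type splitting is needed because each decoder only recovers its own message.
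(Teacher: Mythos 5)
Your proposal is correct and follows essentially the same route as the paper: random codebooks drawn from $P_{X_1^n}$ and $P_{X_2^n}$ with information-density threshold decoding for the direct part (the paper's Lemma~\ref{Error_lemma_1}, which bounds the ensemble average of $\varepsilon_n^{(1)}+\varepsilon_n^{(2)}$ and selects one good code), and the Verd\'u--Han converse lemma applied to each marginal channel with inputs uniform and independent over the two codebooks (the paper's Lemma~\ref{Error_lemma_2}). Your explicit remark that averaging over the other user's codebook makes the effective channel equal to $P_{Y_1^n|X_1^n}$, resp.\ $P_{Y_2^n|X_2^n}$, is exactly the decoupling the paper uses implicitly.
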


To prove Theorem~\ref{Capacity_Theorem}, we need the following lemmas.

\begin{lemma}\label{Error_lemma_1}
Let
$$({\bf X}_1 = \{{X}_1^n\}_{n=1}^{\infty},{\bf X}_2 = \{{X}_2^n\}_{n=1}^{\infty})$$
be any channel input such that $({\bf X}_1,{\bf X}_2) \in S_I$. The corresponding output via an interference channel ${\bf W} = \{W^n\}$ is denoted by $({\bf Y}_1=\{{Y}_1^n\}_{n=1}^{\infty},{\bf Y}_2 = \{{Y}_2^n\}_{n=1}^{\infty})$. Then, for any fixed $M_n^{(1)}$ and $M_n^{(2)}$, there exists an $(n,M_n^{(1)},M_n^{(2)},\varepsilon_n^{(1)}, \varepsilon_n^{(2)})$ code satisfying that
\begin{equation}\label{Lamma_1_inequa}
\varepsilon_n^{(1)} + \varepsilon_n^{(2)} \leq {\rm Pr}\{T^c_n(1)\} + {\rm Pr}\{T^c_n(2)\} + 2e^{-n\gamma},
\end{equation}
where
\begin{equation}
\begin{array}{l}
T_n(1) = \{({\bf x}_1,{\bf y}_1)| \frac{1}{n} \log \frac{P_{Y_1^n|X_1^n}({\bf y}_1 | {\bf x}_1)}{P_{{Y}_1^n}({\bf y}_1)} > \frac{1}{n} \log M_n^{(1)} + \gamma\}\nonumber\\
T_n(2) = \{({\bf x}_2,{\bf y}_2)| \frac{1}{n} \log \frac{P_{Y_2^n|X_2^n}({\bf y}_2 | {\bf x}_2)}{P_{{Y}_2^n}({\bf y}_2)} > \frac{1}{n} \log M_n^{(2)} + \gamma\}.\nonumber
\end{array}
\end{equation}

\end{lemma}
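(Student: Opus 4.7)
\textbf{Proof proposal for Lemma~\ref{Error_lemma_1}.}
The plan is a random-coding argument paired with threshold (information-spectrum) decoding, in the spirit of Feinstein's lemma adapted to the interference setting. For each $n$ I would generate the two codebooks independently: draw $\mathbf{X}_1(1),\ldots,\mathbf{X}_1(M_n^{(1)})$ i.i.d.\ according to $P_{X_1^n}$ and $\mathbf{X}_2(1),\ldots,\mathbf{X}_2(M_n^{(2)})$ i.i.d.\ according to $P_{X_2^n}$, with the two codebooks independent of each other. Decoder~1 uses the threshold rule ``declare $\hat{i}$ if $\hat{i}$ is the unique index with $(\mathbf{x}_1(\hat{i}),\mathbf{y}_1)\in T_n(1)$'', and decoder~2 uses the analogous rule with $T_n(2)$. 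The first step in the analysis is to observe that, because $(\mathbf{X}_1,\mathbf{X}_2)\in S_I$, when message $(i,j)=(1,1)$ (say) is sent, the ensemble-induced joint law of $(\mathbf{X}_1(1),\mathbf{Y}_1)$ is exactly $P_{X_1^n}(\mathbf{x}_1)P_{Y_1^n|X_1^n}(\mathbf{y}_1|\mathbf{x}_1)$, with $P_{Y_1^n|X_1^n}$ as defined in (\ref{Py1x1}); the random $\mathbf{X}_2(1)$ is simply absorbed into the effective point-to-point channel for user~1.

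Next, I would decompose the error event for decoder~1 into (i)~the transmitted pair fails the threshold test, i.e.\ $(\mathbf{X}_1(1),\mathbf{Y}_1)\notin T_n(1)$, and (ii)~some other codeword $\mathbf{X}_1(i')$, $i'\neq 1$, satisfies $(\mathbf{X}_1(i'),\mathbf{Y}_1)\in T_n(1)$. Event~(i) has probability exactly $\Pr\{T_n^c(1)\}$ by the observation above. For event~(ii), since $\mathbf{X}_1(i')$ is independent of $\mathbf{Y}_1$, its ensemble probability is
\[
\sum_{(\mathbf{x}_1,\mathbf{y}_1)\in T_n(1)} P_{X_1^n}(\mathbf{x}_1)\,P_{Y_1^n}(\mathbf{y}_1),
\]
and the defining inequality of $T_n(1)$ lets me replace $P_{Y_1^n}(\mathbf{y}_1)$ by $P_{Y_1^n|X_1^n}(\mathbf{y}_1|\mathbf{x}_1)/(M_n^{(1)}e^{n\gamma})$, giving a bound of $1/(M_n^{(1)}e^{n\gamma})$. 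Summing over the at most $M_n^{(1)}-1$ competing indices yields $e^{-n\gamma}$. The same computation gives the decoder~2 bound.

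Combining the two, the ensemble average satisfies
\[
\mathbb{E}\bigl[\varepsilon_n^{(1)}+\varepsilon_n^{(2)}\bigr]\leq \Pr\{T_n^c(1)\}+\Pr\{T_n^c(2)\}+2e^{-n\gamma}.
\]
A standard expurgation-free averaging argument then guarantees the existence of at least one deterministic pair of codebooks (with the induced maximum-likelihood or threshold decoding sets) that meets (\ref{Lamma_1_inequa}).

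The only conceptually delicate point is the second paragraph: one must be careful that ``treating the other user's random codeword as part of the channel'' yields precisely the marginal conditional distribution $P_{Y_1^n|X_1^n}$ used to define $T_n(1)$ and the spectral inf-mutual information rate. This relies crucially on the independence $\mathbf{X}_1\perp\mathbf{X}_2$ built into $S_I$; without it, the ensemble output law would not match (\ref{Py1x1})--(\ref{Py2x2}) and the change-of-measure step bounding event~(ii) would break down. The remaining calculations are routine.
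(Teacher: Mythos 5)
Your proposal is correct and follows essentially the same route as the paper: random codebooks drawn from $P_{X_1^n}$ and $P_{X_2^n}$, threshold decoding with $T_n(1)$ and $T_n(2)$, the union bound plus the change-of-measure step $P_{Y_1^n}({\bf y}_1)\leq P_{Y_1^n|X_1^n}({\bf y}_1|{\bf x}_1)e^{-n\gamma}/M_n^{(1)}$ on $T_n(1)$ to get the $e^{-n\gamma}$ terms, and averaging to extract one good code. Your explicit remark that the independence ${\bf X}_1\perp{\bf X}_2$ is what makes the ensemble law of $({\bf X}_1(1),{\bf Y}_1)$ match (\ref{Py1x1}) is exactly the point the paper uses implicitly.
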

\begin{proof}[Proof of Lemma~\ref{Error_lemma_1}]
The proof is similar to that of~\cite[Lemma~3]{Han93}.

{\bf Codebook generation}. Generate $M_n^{(1)}$ independent codewords ${\bf x}_1(1),...,{\bf x}_1(M_n^{(1)}) \in \mathcal{X}_1^n$ subject to the probability distribution $P_{X_1^n}$. Similarly, generate $M_n^{(2)}$ independent codewords ${\bf x}_2(1),...,{\bf x}_2(M_n^{(2)}) \in \mathcal{X}_2^n$ subject to the probability distribution $P_{X_2^n}$.

{\bf Encoding.} To send message $i$, sender~1 sends the codeword ${\bf x}_1(i)$. Similarly, to send message $j$, sender~2 sends ${\bf x}_2(j)$.

{\bf Decoding.} The receiver~1 chooses $i$ such that $({\bf x}_1(i), {\bf y}_1) \in T_n(1)$ if such $i$ exists and is unique. Similarly, the receiver~2 chooses the $j$ such that $({\bf x}_2(j), {\bf y}_2) \in T_n(2)$ if such $j$ exists and is unique. Otherwise, an error is declared.

 {\bf Analysis of the error probability.} By the symmetry of the random code construction, we can assume that $(1,1)$ was sent. Define
 $$E_{1i} = \{({\bf x}_1(i), {\bf y}_1) \in T_n(1)\},\,\, E_{2j} = \{({\bf x}_2(j), {\bf y}_2) \in T_n(1)\},$$
For receiver~1, an error occurs if $({\bf x}_1(1), {\bf y}_1) \notin T_n(1)$ or $({\bf x}_1(i), {\bf y}_1) \in T_n(1)$ for some $i \neq 1$.
 Similarly, for receiver~2, an error occurs if $({\bf x}_2(1), {\bf y}_2) \notin T_n(2)$ or $({\bf x}_2(j), {\bf y}_2) \in T_n(2)$ for some $j \neq 1$. So the ensemble average of the error probabilities of decoder~1 and decoder~2 can be upper-bounded as follows:
 \begin{equation}\nonumber
 \begin{array}{l}
 \overline{{\varepsilon}_n^{(1)}+{\varepsilon}_n^{(2)}} = \overline{\varepsilon_n^{(1)}} + \overline{\varepsilon_n^{(2)}}\\
 \leq {\rm Pr}\{E_{11}^c\} + {\rm Pr}\{\bigcup\limits_{i \neq 1} E_{1i}\} + {\rm Pr}\{E_{21}^c\} + {\rm Pr}\{\bigcup\limits_{j \neq 1} E_{2j}\}
 \end{array}
 \end{equation}
It can be seen that
  \begin{equation}\nonumber
 \begin{array}{ll}
 &{\rm Pr}\{\bigcup\limits_{i \neq 1} E_{1i}\} \leq \sum\limits_{i \neq 1}{\rm Pr}\{E_{1i}\}\\
 &=\sum\limits_{i \neq 1} {\rm Pr}\{({\bf x}_1(i), {\bf y}_1) \in T_n(1)\}\\
&\stackrel{a}{=} \sum\limits_{i \neq 1} \sum\limits_{({\bf x}_1,{\bf y}_1) \in T_n(1)}P_{X_1^n}({\bf x}_1)P_{Y_1^n}({\bf y}_1)\\
&\stackrel{b}{\leq} \sum\limits_{i \neq 1} \sum\limits_{({\bf x}_1,{\bf y}_1) \in T_n(1)}P_{X_1^n}({\bf x}_1)P_{Y_1^n|X_1^n}({\bf y}_1|{\bf x}_1)\frac{e^{-n\gamma}}{M_n^{(1)}}\\
&\leq  \sum\limits_{i \neq 1}\frac{e^{-n\gamma}}{M_n^{(1)}} = (M_n^{(1)}- 1)\frac{e^{-n\gamma}}{M_n^{(1)}} \leq e^{-n\gamma},
 \end{array}
 \end{equation}
 where $``a"$ follows from the independence of ${\bf x}_1(i)~(i \neq 1)$ and ${\bf y}_1$ and $``b"$ follows from the definition of $T_n(1)$. Similarly, we obtain
 \begin{equation}
 {\rm Pr}\{\bigcup\limits_{j \neq 1} E_{2j}\} \leq e^{-n\gamma}.
 \end{equation}
 Combining all inequalities above, we can see that there must exist at least one $(n,M_n^{(1)},M_n^{(2)},\varepsilon_n^{(1)}, \varepsilon_n^{(2)})$ code satisfying~(\ref{Lamma_1_inequa}).
\end{proof}

\begin{lemma}\label{Error_lemma_2}
For all $n$, any $(n,M_n^{(1)},M_n^{(2)},\varepsilon_n^{(1)}, \varepsilon_n^{(2)})$ code satisfies that
\begin{equation}\label{leq}
\begin{array}{l}
\varepsilon_n^{(1)} \geq {\rm Pr}\{\frac{1}{n} \log \frac{P_{Y_1^n|X_1^n}({Y}_1^n | {X}_1^n)}{P_{{Y}_1^n}({Y}_1^n)} \leq \frac{1}{n} \log M_n^{(1)} - \gamma\} - e^{-n\gamma}\\
\varepsilon_n^{(2)} \geq {\rm Pr}\{\frac{1}{n} \log \frac{P_{Y_2^n|X_2^n}({Y}_2^n | {X}_2^n)}{P_{{Y}_2^n}({Y}_2^n)} \leq \frac{1}{n} \log M_n^{(2)} - \gamma\} - e^{-n\gamma},
\end{array}
\end{equation}
for every $\gamma > 0$, where $X_1^n~({\rm resp.,}\,X_2^n)$ places probability mass $1/M_n^{(1)}~({\rm resp.,}\,1/M_n^{(2)})$ on each codeword for encoder 1~(resp., encoder 2) and (\ref{P_independ}), (\ref{Py1x1}), (\ref{Py2x2}) hold.
\end{lemma}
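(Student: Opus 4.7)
My plan is to show that, after averaging out the interfering user's codebook, each of the two error-probability bounds reduces to the classical Verd\'u--Han converse for a single-user channel, which can then be invoked directly.

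The first step is to observe that, with $P_{X_1^n}$ and $P_{X_2^n}$ being uniform over their respective codebooks and independent by the code construction, the marginalization formula (\ref{Py1x1}) gives
\begin{equation*}
\varepsilon_n^{(1)} = \frac{1}{M_n^{(1)}} \sum_{i=1}^{M_n^{(1)}} P_{Y_1^n|X_1^n}\bigl(\mathcal{B}_{1i}^c \bigm| {\bf x}_1(i)\bigr),
\end{equation*}
so that decoder~1 behaves exactly like a decoder for the point-to-point channel $P_{Y_1^n|X_1^n}$ with codebook $\{{\bf x}_1(i)\}$ and disjoint decoding regions $\{\mathcal{B}_{1i}\}$; an analogous identity for decoder~2 follows from (\ref{Py2x2}). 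This reduction is the key structural step: once it is done, the interference structure drops out and we are left with two independent single-user converse problems.

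The second step is the standard change-of-measure argument from \cite[Lemma~4]{Han93}. Define
\begin{equation*}
S = \Bigl\{({\bf x}_1,{\bf y}_1): \tfrac{1}{n}\log\tfrac{P_{Y_1^n|X_1^n}({\bf y}_1\mid{\bf x}_1)}{P_{Y_1^n}({\bf y}_1)} \leq \tfrac{1}{n}\log M_n^{(1)} - \gamma\Bigr\},
\end{equation*}
and split $\Pr\{(X_1^n, Y_1^n) \in S\} = \tfrac{1}{M_n^{(1)}}\sum_i P_{Y_1^n|X_1^n}(S_i\mid {\bf x}_1(i))$, where $S_i = \{{\bf y}_1: ({\bf x}_1(i),{\bf y}_1)\in S\}$, according to whether ${\bf y}_1$ lies in $\mathcal{B}_{1i}$ or in $\mathcal{B}_{1i}^c$. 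The piece on $\mathcal{B}_{1i}$ is bounded by $e^{-n\gamma}$ using the pointwise estimate $P_{Y_1^n|X_1^n}({\bf y}_1\mid{\bf x}_1(i)) \leq M_n^{(1)} e^{-n\gamma} P_{Y_1^n}({\bf y}_1)$ available on $S_i$ together with the disjointness $\sum_i P_{Y_1^n}(\mathcal{B}_{1i})\leq 1$; the piece on $\mathcal{B}_{1i}^c$ is bounded by $\varepsilon_n^{(1)}$. Rearranging gives the first inequality of (\ref{leq}); the argument for $\varepsilon_n^{(2)}$ is identical.

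The main obstacle, and really the only nontrivial point, is the first step: recognizing that the independence $(X_1^n,X_2^n)\in S_I$ built into the definition of a code, together with the uniform message distribution, makes it legitimate to collapse the bivariate interference channel into two marginal single-user channels before the converse is invoked. After that observation, the rest is a routine transcription of the classical argument.
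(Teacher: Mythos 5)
Your proof is correct and follows essentially the same route as the paper: both reduce receiver~1 to the single-user channel $P_{Y_1^n|X_1^n}$ of (\ref{Py1x1}) with uniform inputs and then run the Verd\'u--Han-type converse of \cite[Lemma~4]{Han93}, splitting the low-information-density event over the disjoint decoding sets $\mathcal{B}_{1i}$ and bounding the two pieces by $e^{-n\gamma}$ and $\varepsilon_n^{(1)}$ respectively. The only cosmetic difference is that the paper rewrites the threshold condition via Bayes' rule as $P_{X_1^n|Y_1^n}(X_1^n|Y_1^n)\leq e^{-n\gamma}$ before splitting, whereas you work directly with the bound $P_{Y_1^n|X_1^n}({\bf y}_1|{\bf x}_1(i))\leq M_n^{(1)}e^{-n\gamma}P_{Y_1^n}({\bf y}_1)$; these are equivalent, and your explicit verification that $\varepsilon_n^{(1)}$ equals the average error under the marginal channel is a step the paper leaves implicit.
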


\begin{proof}[Proof of Lemma~\ref{Error_lemma_2}]
The proof is similar to that of~\cite[Lemma~4]{Han93}. By using the relation
$$ \frac{P_{Y_1^n|X_1^n}({\bf y}_1 | {\bf x}_1)}{P_{{Y}_1^n}({\bf y}_1)} =  \frac{P_{X_1^n|Y_1^n}({\bf x}_1 | {\bf y}_1)}{P_{{X}_1^n}({\bf x}_1)}$$
and noticing $P_{{X}_1^n}({\bf x}_1) = \frac{1}{M_n^{(1)}}$, we can rewrite the first term on the right-hand side of the first inequality of (\ref{leq}) as
$${\rm Pr}\{P_{X_1^n|Y_1^n}(X_1^n|Y_1^n) \leq e^{-n\gamma}\}.$$
By setting
$$L_n = \{({\bf x}_1,{\bf y}_1)|P_{X_1^n|Y_1^n}({\bf x}_1 | {\bf y}_1) \leq  e^{-n\gamma}\},$$
the first inequality of (\ref{leq}) can be expressed as
\begin{equation}
{\rm Pr}\{L_n\} \leq \varepsilon_n^{(1)} +  e^{-n\gamma}.
\end{equation}
In order to prove this inequality, we set
$$\mathcal{A}_i = \{{\bf y}_1 \in \mathcal{Y}_1^n| P_{X_1^n|Y_1^n}({\bf x}_1(i) | {\bf y}_1) \leq  e^{-n\gamma}\}.$$
It can be seen that
\begin{equation}
\begin{array}{l}
{\rm Pr}\{L_n\} = \sum\limits_{i = 1}^{M_n^{(1)}} P_{X_1^nY_1^n}({\bf x}_1(i),\mathcal{A}_i)\\
= \sum\limits_{i = 1}^{M_n^{(1)}} P_{X_1^nY_1^n}({\bf x}_1(i),\mathcal{A}_i\bigcap \mathcal{B}_{1i}) + \sum\limits_{i = 1}^{M_n^{(1)}} P_{X_1^nY_1^n}({\bf x}_1(i),\mathcal{A}_i\bigcap \mathcal{B}_{1i}^c)\\
\leq \sum\limits_{i = 1}^{M_n^{(1)}} P_{X_1^nY_1^n}({\bf x}_1(i),\mathcal{A}_i\bigcap \mathcal{B}_{1i}) + \sum\limits_{i = 1}^{M_n^{(1)}} P_{X_1^nY_1^n}({\bf x}_1(i),\mathcal{B}_{1i}^c)\\
= \sum\limits_{i = 1}^{M_n^{(1)}}\sum\limits_{{\bf y}_1 \in \mathcal{A}_i\bigcap \mathcal{B}_{1i}} P_{X_1^nY_1^n}({\bf x}_1(i),{\bf y}_1) + \varepsilon_n^{(1)} \\
=\sum\limits_{i = 1}^{M_n^{(1)}}\sum\limits_{{\bf y}_1 \in \mathcal{A}_i\bigcap \mathcal{B}_{1i}} P_{X_1^n|Y_1^n}({\bf x}_1(i)|{\bf y}_1)P_{Y_1^n}({\bf y}_1) + \varepsilon_n^{(1)} \\
\stackrel{a}{\leq} e^{-n\gamma}\sum\limits_{i = 1}^{M_n^{(1)}}\sum\limits_{{\bf y}_1 \in \mathcal{B}_{1i}}P_{Y_1^n}({\bf y}_1)+ \varepsilon_n^{(1)} \\
= e^{-n\gamma}P_{Y_1^n}(\bigcup\limits_{i = 1}^{M_n^{(1)}}\mathcal{B}_{1i})+ \varepsilon_n^{(1)} \leq e^{-n\gamma} + \varepsilon_n^{(1)},
\end{array}
\end{equation}
where $\mathcal{B}_{1i}$ is the decoding region corresponding to codeword ${\bf x}_1(i)$ and $``a"$ follows from the definition of $\mathcal{A}_i$. Therefore, the first inequality of (\ref{leq}) is proved. Similarly, we can obtain the second inequality of (\ref{leq}).
\end{proof}

Now we prove Theorem~\ref{Capacity_Theorem}.
\begin{proof}[Proof of Theorem~\ref{Capacity_Theorem}]

1)~To prove that an arbitrary $(R_1,R_2)$ satisfying (\ref{Capacity_R1}) and (\ref{Capacity_R2}) is achievable, we define
$$M_n^{(1)} = e^{n(R_1-2\gamma)}\,\,\,{\rm and}\,\,\,M_n^{(2)} = e^{n(R_2-2\gamma)}$$
for an arbitrarily small constant $\gamma > 0$. Then, Lemma~\ref{Error_lemma_1} guarantees the existence of an $(n,M_n^{(1)},M_n^{(2)},\varepsilon_n^{(1)}, \varepsilon_n^{(2)})$ code satisfying
\begin{equation}\label{Direct_leq1}
\begin{array}{l}
\varepsilon_n^{(1)} + \varepsilon_n^{(2)} \leq {\rm Pr}\{\frac{1}{n} \log \frac{P_{Y_1^n|X_1^n}({Y}_1^n | {X}_1^n)}{P_{{Y}_1^n}({Y}_1^n)} \leq  R_1 - \gamma\}\\
+{\rm Pr}\{\frac{1}{n} \log \frac{P_{Y_2^n|X_2^n}({Y}_2^n | {X}_2^n)}{P_{{Y}_2^n}({Y}_2^n)} \leq  R_2 - \gamma\} + 2e^{-n\gamma}\\
 \leq {\rm Pr}\{\frac{1}{n} \log \frac{P_{Y_1^n|X_1^n}({Y}_1^n | {X}_1^n)}{P_{{Y}_1^n}({Y}_1^n)} \leq  \underline{I}({\bf X}_1;{\bf Y}_1)- \gamma\} \\
 + {\rm Pr}\{\frac{1}{n} \log \frac{P_{Y_2^n|X_2^n}({Y}_2^n | {X}_2^n)}{P_{{Y}_2^n}({Y}_2^n)} \leq  \underline{I}({\bf X}_2;{\bf Y}_2)- \gamma\} + 2e^{-n\gamma}.
\end{array}
\end{equation}
From the definition of the spectral inf-mutual information rate, we have
$$\lim_{n \rightarrow \infty} \varepsilon_n^{(1)} = 0 \,\,\,{\rm and}\,\,\,\lim_{n \rightarrow \infty} \varepsilon_n^{(2)} = 0.$$

2)~Suppose that a rate pair $(R_1,R_2)$ is achievable. Then, for any constant $\gamma > 0$, there exists an $(n,M_n^{(1)},M_n^{(2)},\varepsilon_n^{(1)}, \varepsilon_n^{(2)})$ code satisfying
\begin{equation}
\frac{\log M_n^{(1)}}{n} \geq R_1 - \gamma \,\,\,{\rm and}\,\,\, \frac{\log M_n^{(2)}}{n} \geq R_2 - \gamma
\end{equation}
for all sufficiently large $n$ and
$$\lim_{n \rightarrow \infty} \varepsilon_n^{(1)} = 0 \,\,\,{\rm and}\,\,\,\lim_{n \rightarrow \infty} \varepsilon_n^{(2)} = 0.$$
From Lemma~\ref{Error_lemma_2}, we get
\begin{equation}\label{geqR}
\begin{array}{l}
\varepsilon_n^{(1)} \geq {\rm Pr}\{\frac{1}{n} \log \frac{P_{Y_1^n|X_1^n}({Y}_1^n | {X}_1^n)}{P_{{Y}_1^n}({Y}_1^n)} \leq R_1 - 2\gamma\} - e^{-n\gamma}\\
\varepsilon_n^{(2)} \geq {\rm Pr}\{\frac{1}{n} \log \frac{P_{Y_2^n|X_2^n}({Y}_2^n | {X}_2^n)}{P_{{Y}_2^n}({Y}_2^n)} \leq R_2 - 2\gamma\} - e^{-n\gamma}
\end{array}.
\end{equation}
Taking the limits as $n\rightarrow \infty$ on both sides, we have
\begin{equation}\label{geqR1}
\begin{array}{l}
\lim\limits_{n\rightarrow \infty}{\rm Pr}\{\frac{1}{n} \log \frac{P_{Y_1^n|X_1^n}({Y}_1^n | {X}_1^n)}{P_{{Y}_1^n}({Y}_1^n)} \leq R_1 - 2\gamma\} = 0\\
\lim\limits_{n\rightarrow \infty}{\rm Pr}\{\frac{1}{n} \log \frac{P_{Y_2^n|X_2^n}({Y}_2^n | {X}_2^n)}{P_{{Y}_2^n}({Y}_2^n)} \leq R_2 - 2\gamma\} = 0
\end{array},
\end{equation}
implying by definition that $R_1 - 2\gamma \leq \underline{I}({\bf X}_1;{\bf Y}_1)$ and $R_2 - 2\gamma \leq \underline{I}({\bf X}_2;{\bf Y}_2)$. This completes the proof since $\gamma$ is arbitrary.
%
\end{proof}
\begin{figure}
  \includegraphics[width=8.5cm]{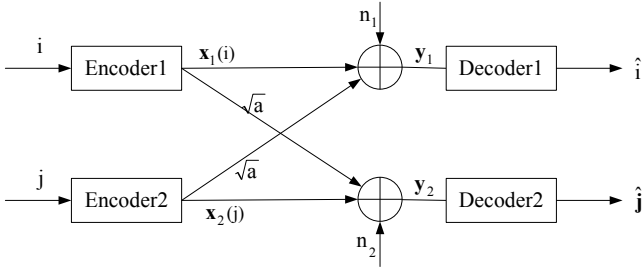}\\
  \caption{Symmetric Gaussian interference channel.}\label{GIFC}
\end{figure}
\section{Numerical Results}\label{GIFC_section}
We have obtained a formula of the capacity region for the¡¡
general IC, which shows that any pair of independent input
processes define a pair of achievable rates. Although it is
not computable in general, the formula provides us useful
insights into the interference channels, as illustrated by the following example.

The considered example has the model as shown in Fig.~\ref{GIFC}, where the channel inputs
${\bf x}_1(i)$ and ${\bf x}_2(j)$ are BPSK sequences with power constraints $P_1$ and $P_2$, respectively; the additive noise ${\bf n}_1$ and ${\bf n}_2$ are sequences of independent and identically distributed~(i.i.d.) Gaussian random variables of variance one; the channel outputs ${\bf y}_1$ and ${\bf y}_2$ are
\begin{eqnarray}
{\bf y}_1 &=& {\bf x}_1(i) + \sqrt{a}{\bf x}_2(j) + {\bf n}_1,\\
{\bf y}_2 &=& {\bf x}_2(j) + \sqrt{a}{\bf x}_1(i) + {\bf n}_2.
\end{eqnarray}
For any two arbitrary input processes ${\bf x}_1$ and ${\bf x}_2$, the defined pair of rates given in Theorem~\ref{Capacity_Theorem} are infeasible to compute.  Therefore, we assume that ${\bf x}_1$ and
${\bf x}_2$ are the outputs from two~(possibly different) generalized trellis encoders driven by independent uniformly distributed~(i.u.d.) input sequences, as proposed in~\cite{XiaoMa2011}.  In this case, both ${\bf x}_1$ and
${\bf x}_2$ are block-wise stationary, and~(hence)
\begin{eqnarray}
\underline{I}({\bf X}_1;{\bf Y}_1) &=& \lim_{n \rightarrow \infty}\frac{1}{n}I({X}_1^n;{Y}_1^n),\label{Spectrum_information_1}\\
\underline{I}({\bf X}_2;{\bf Y}_2) &=& \lim_{n \rightarrow \infty}\frac{1}{n}I({X}_2^n;{Y}_2^n).\label{Spectrum_information_2}
\end{eqnarray}
Since ${\bf x}_1, {\bf x}_2$ and ${\bf y}_1, {\bf y}_2$ can be viewed as the Markov chains and the hidden Markov chains, respectively, the right-hand sides of~(\ref{Spectrum_information_1}) and~(\ref{Spectrum_information_2}) can be estimated by performing the BCJR algorithm~\footnote{Only forward recursion is required.}~\cite{Arnold06}\cite{BCJR74}.

We consider two schemes, UnBPSK and CcBPSK, where UnBPSK means that ${\bf x}_1$~(resp.~${\bf x}_2$) is an i.u.d.~BPSK sequence and CcBPSK means that  ${\bf x}_1$~(resp.~${\bf x}_2$) is an output from the convolutional encoder with the generator matrix
\begin{equation}\label{Generator}
G(D) = [1 + D + D^2\,\,\,1+D^2].
\end{equation}
Fig.~\ref{FigTRIFFC} shows the trellis representation of the signal model when sender~1 uses CcBPSK and sender~2 uses UnBPSK.
Fig.~\ref{UncodedVSConv_figure} shows the numerical results. The point ``A" can be achieved by a coding scheme, in which
sender~1 uses a binary linear~(coset) code concatenated with the convolutional code and sender~2 uses a binary linear code, and the point ``B" can be achieved similarly; while the points on the line ``AB" can be achieved by time-sharing scheme. The point ``C" represents the limits when the two users use binary linear codes but regard the interference as an i.u.d. additive~(BPSK) noise. It can be seen that knowing the structure of the interference can be used to improve potentially the bandwidth-efficiency.
\begin{figure}
  \centering
  \includegraphics[width=8.5cm]{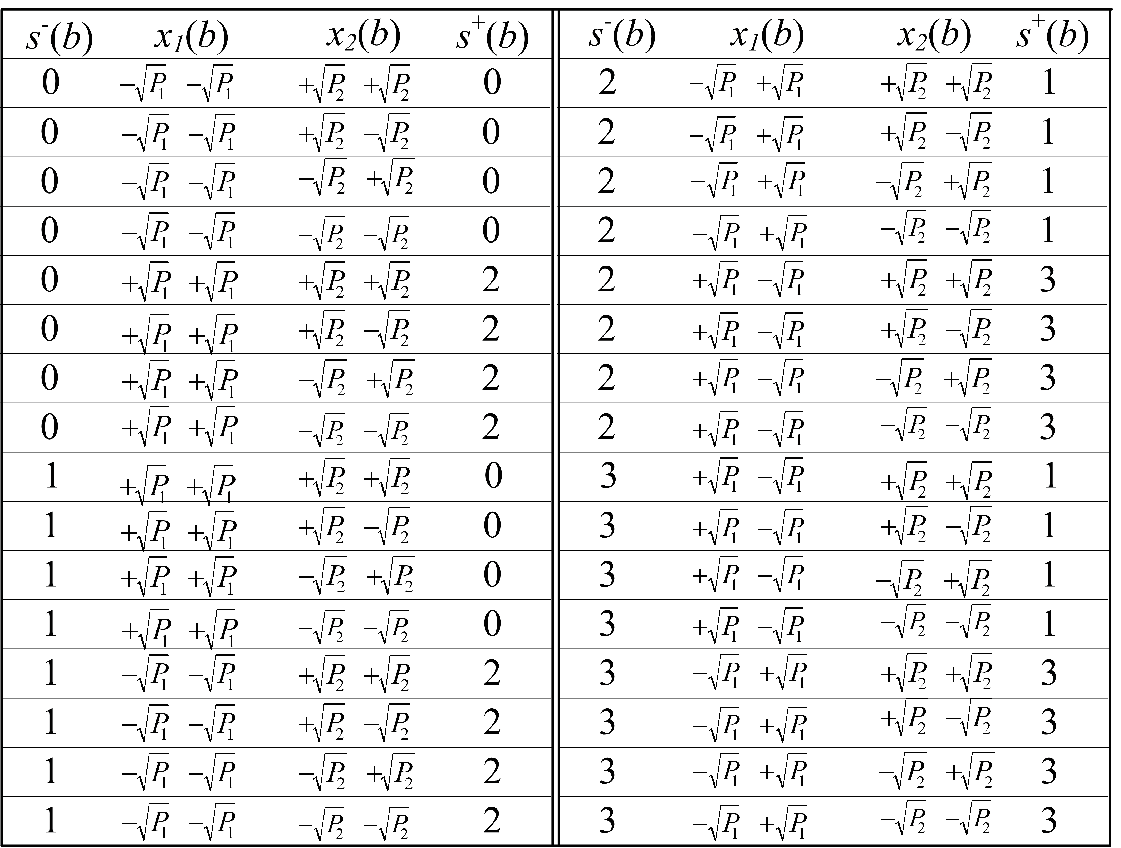}\\
  \caption{The trellis section of~(CcBPSK, UnBPSK) with 32 branches. For each branch $b$, $s^-(b)$ and $s^+(b)$ are the starting state and the ending state, respectively; while the associated labeling $x_1(b)$ and $x_2(b)$ are the transmitted signals at sender~1 and sender~2, respectively.}\label{FigTRIFFC}
\end{figure}


\begin{figure}
\centering
  \includegraphics[width=8.5cm]{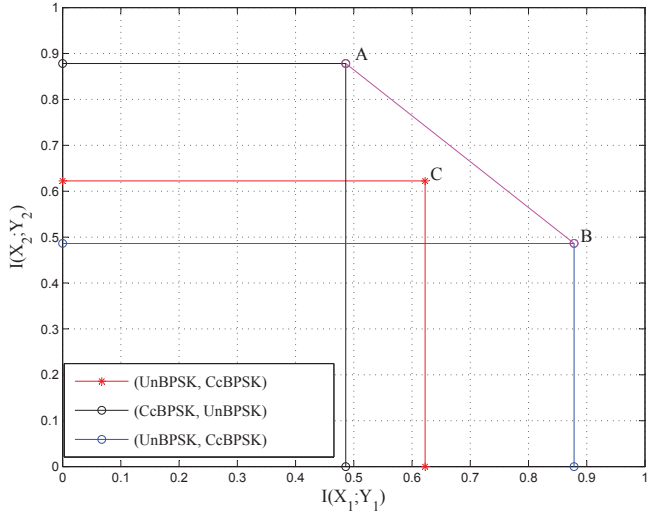}\\
  \caption{The evaluated achievable rate pairs of a specific GIFC, where $P_1=P_2=7.0~{\rm dB}$ and $a = 0.5$. UnBPSK means that ${\bf x}_1$~(resp.~${\bf x}_2$) is an i.u.d.~BPSK sequence and CcBPSK means that  ${\bf x}_1$~(resp.~${\bf x}_2$) is an output from the convolutional encoder with the generator matrix
$[1 + D + D^2\,\,\,1+D^2]$.}\label{UncodedVSConv_figure}
\end{figure}

\section{Conclusions}\label{Conclusions}
In this paper, we have proved that the capacity region of the two-user interference channel is the union of a family of rectangles, each of which is defined by a pair of spectral inf-mutual information rates associated with two independent input processes. For special cases, the defined pair of rates can be computed, which provide us useful insights into the interference channels.





%

\bibliographystyle{IEEEtran}
\bibliography{IEEEabrv,mybibfile}


\end{document}